\theoremstyle{plain}
\numberwithin{equation}{section}
\newtheorem{theorem}{Theorem}[section]
\newtheorem{conjecture}[theorem]{Conjecture}
\newtheorem{lemma}[theorem]{Lemma}
\newtheorem{corollary}[theorem]{Corollary}
\theoremstyle{definition}
\newtheorem{definition}[theorem]{Definition}
\newtheorem{example}[theorem]{Example}
\theoremstyle{remark}
\newtheorem{remark}[theorem]{Remark}
\numberwithin{equation}{section}
\newcommand{\R}{{\mathbb R}}
\newcommand{\bR}{{\mathbb R}}
\newcommand{\cB}{{\mathcal B}}
\newcommand{\cH}{{\mathcal H}}
\newcommand{\ket}[1]{\left\vert #1\right\rangle}
\newcommand{\bra}[1]{\left\langle #1\right\vert}
\renewcommand{\Re}{\,\mathrm{Re}\,}   
\renewcommand{\Im}{\,\mathrm{Im}\,}   
\def\idty{{\mathchoice {\mathrm{1\mskip-4mu l}} {\mathrm{1\mskip-4mu l}} %
{\mathrm{1\mskip-4.5mu l}} {\mathrm{1\mskip-5mu l}}}}
\newcommand{\Tr}{\mathrm{Tr}}
\newcommand{\be}{\begin{equation}}
\newcommand{\ee}{\end{equation}}
\newcommand{\bea}{\begin{eqnarray}}
\newcommand{\eea}{\end{eqnarray}}
\newcommand{\beann}{\begin{eqnarray*}}
\newcommand{\eeann}{\end{eqnarray*}}
\begin{document}

\title{Upper continuity bound on the quantum quasi-relative entropy}
\author{Anna Vershynina}
\affil{\small{Department of Mathematics, Philip Guthrie Hoffman Hall, University of Houston, 
3551 Cullen Blvd., Houston, TX 77204-3008, USA}}
\renewcommand\Authands{ and }
\renewcommand\Affilfont{\itshape\small}

\date{\today}

\maketitle

\begin{abstract}
We provide an upper bound on the quasi-relative entropy in terms of the trace distance. The bound is derived for two cases: 1) any operator monotone decreasing function and full rank mixed qubit or classical states; 2) a large class of operator monotone decreasing function and any mixed qubit or classical states. Moreover, we derive an upper bound for the Umegaki and Tsallis relative entropies in the case of any finite-dimensional states. The bound for the relative entropy improves the known bounds for some states in any dimensions larger than four. The bound for the Tsallis entropy improves the known bounds.
\end{abstract}

\section{Introduction}

Quantum quasi-relative entropy was introduced by Petz \cite{P85, P86} as a quantum generalization of a classical Csisz\'ar's $f$-divergence \cite{C67-2}. It is defined in the context of von Neumann algebras, but we  consider only the finite-dimensional Hilbert space setup. Let $\cH$ be a  finite-dimensional  Hilbert space, $\rho$ and $\sigma$ be two states (given by density operators), and $f:(0,\infty)\rightarrow\mathbf{R}$ be an operator convex function. Then the quasi-relative entropy (often times called $f$-divergence) is defined as
\begin{equation}\label{eq:quasi-1}
S_f(\rho||\sigma)=\Tr(f(\Delta_{\sigma,\rho})\rho)\ ,
\end{equation}
where $\Delta_{\sigma,\rho}$ is a relative modular operator defined by Araki \cite{A76} that acts as a left and right multiplication for the positive invertible operators $A$ and $B$
$$\Delta_{A,B}(X)=L_AR_B^{-1}(X)=AXB^{-1}\ .$$
Throughout the paper we consider $\rho$ and $\sigma$ to be strictly positive density operators. 

Note that there is an equivalent definition of the $f$-divergence that is sometimes used
\begin{equation}\label{eq:quasi-2}
\tilde{S}_f(\rho||\sigma)=\Tr(f(\Delta_{\rho,\sigma})\sigma)=S_f(\sigma\|\rho)\ .
\end{equation}
But taking $g(x)=xf(x^{-1})$, we obtain
$$S_g(\rho\|\sigma)=\tilde{S}_f(\rho\|\sigma)\ . $$
Since all references related to quasi-relative entropies cited in this paper use definition (\ref{eq:quasi-1}), we will also use this one.

Taking the logarithmic function $f(x)=-\log(x)$ reduces quasi-relative entropy to the Umegaki relative entropy \cite{U62},
$$S(\rho\|\sigma)=\Tr(\rho[\log\rho-\log\sigma])\ . $$

A famous bound relating the quantum relative entropy and the trace distance between two quantum states, is called the Pinsker inequality. A similar inequality holds for the quasi-relative entropy as well, as was shown by Hiai and Mosonyi \cite{HM16} :
$$\frac{f''(1)}{2}\|\rho-\sigma \|_1^2\leq S_f(\rho\|\sigma)\ . $$

The questions now, is to obtain the upper bound on the quasi-relative entropy in terms of the trace distance. The upper continuity bound for the Umegaki relative entropy was obtained in \cite{AE11} in the following form:
\begin{equation}\label{eq:rel_known}
S(\rho\|\sigma)\leq (\alpha_\sigma+T)\log(1+T/\alpha_\sigma)-\alpha_\rho\log(1+T/\alpha_\rho)\ ,
\end{equation}
where throughout the paper $\alpha_\omega$ is the minimal non-zero eigenvalue of the state $\omega$, and $T:=\|\rho-\sigma\|_1/2$.

For $0<q\neq1$, taking $f(x)=\frac{1}{1-q}(1-x^{1-q})$ in (\ref{eq:quasi-1}) leads  to the Tsallis $q$-entropy defined as
\begin{equation}\label{eq:q}
S_q(\rho\|\sigma)=\frac{1}{1-q}\left(1-\Tr(\rho^q\sigma^{1-q})\right)\ ,
\end{equation}
for $\ker(\sigma)\subset\ker(\rho)$. A series of upper bounds for the $q$-entropy in terms of the trace distance were obtained in \cite{R10}.  The derived bounds in \cite{R10} are, in particular, the following:
\begin{itemize}
\item for $q>1$
$$S_q(\rho\|\sigma)\leq \frac{\lceil q \rceil-1}{q-1}\frac{\lambda^{q-1}}{\alpha_\sigma^{q-1}}\|\rho-\sigma\|_1\ , $$
where $\lambda$ is the maximum eigenvalue in the joint spectra of $\rho$ and $\sigma$, and $\lceil q \rceil$ is the smallest integer that is larger than $q$;
\item  for  $1<q\leq 2$, and denoting $\lambda_\rho$ to be the maximal eigenvalue of $\rho$,  and $\alpha=\min\{\alpha_\rho,\alpha_\sigma\}$, the following bounds hold
\begin{equation}\label{eq:Rmoreold}
S_q(\rho\|\sigma)\leq \frac{1}{q-1}\frac{\lambda_\rho^{q}}{\alpha^{q}}\, \|\rho-\sigma\|_1\ . 
\end{equation}
\item for $0<q<1$, 
\begin{equation}\label{eq:Rless}
S_q(\rho\|\sigma)\leq \frac{1}{{1-q}}\, \frac{\lambda_\rho^{q}}{\alpha_\sigma^{q}}\, \|\rho-\sigma\|_1\ . 
\end{equation}
\end{itemize}
Note that a series of other bounds was derived in \cite{R10}, which for some states could be  an improvement of the bounds above.

We investigate the upper continuity bound for a quasi-relative entropy for an operator monotone decreasing function $f$. 

Main results:
\begin{itemize}
\item Let $f$ be an operator monotone decreasing function, and states $\rho$ and $\sigma$ are either $2$-dimensional qubit states or classical states. Assume one of the two conditions: 1) $\rho$ is full rank; 2) $f$ is such that $a_f=0$ (defined below). in Theorem \ref{thm:main3} we prove
\begin{equation}\label{eq:prelim}
S_f(\rho\|\sigma)\leq\|\rho-\sigma\|_1\left[\frac{\lambda_\rho}{\lambda_\rho-\alpha_\sigma} f(\lambda_\rho^{-1}\alpha_\sigma)-a_f\right] \ ,
\end{equation}
where
\begin{itemize}
\item $\lambda_\rho\in(0,1]$ is the largest eigenvalue of $\rho$,
\item $\alpha_\sigma\in(0,1]$ is the smallest eigenvalue of $\sigma$,
\item $a_f=-\lim_{y\uparrow \infty}\frac{f(iy)}{iy}.$
\end{itemize}
In the most general case, we obtain an upper bound dependent on the dimension of the Hilbert space, see Theorem \ref{thm:main}. We  conjecture that the bound \eqref{eq:prelim} holds in the general case as well, see Conjecture \ref{conjecture}.

\item In Theorem \ref{thm:main2} we provide the following upper bound to the relative entropy
$$S(\rho\|\sigma)\leq\|\rho-\sigma\|_1\lambda_\rho \frac{\log(\alpha_\rho)-\log(\alpha_\sigma)}{\alpha_\rho-\alpha_\sigma}\leq \frac{\lambda_\rho}{\alpha} \|\rho-\sigma\|_1\ ,
$$
where $\alpha=\min\{\alpha_\rho,\alpha_\sigma\}$. For any dimension larger than four, there are states, for which the present bound is  better than known bound (\ref{eq:rel_known}). See Section \ref{rem:rel-better}.

If states are two-dimensional, from (\ref{eq:prelim}) and Corollary \ref{cor:rel}, we obtain
$$S(\rho\|\sigma)\leq \|\rho-\sigma\|_1 \lambda_\rho \frac{\log\lambda_\rho-\log\alpha_\sigma}{\lambda_\rho-\alpha_\sigma}\leq\frac{\lambda_\rho}{\alpha_\sigma} \|\rho-\sigma\|_1\ .
$$
\item For Tsallis relative entropy for $q>1$, in Remark \ref{rem} we show how improve the bound (\ref{eq:Rmoreold}) in the proof in \cite{R10}. We obtain
\begin{equation}\label{eq:Rmore}
S_q(\rho\|\sigma)\leq \frac{\lambda_\rho^{q}}{\alpha^{q}}\, \|\rho-\sigma\|_1\ . 
\end{equation}

\item In Theorem \ref{thm:main1}, for $q\in(0,1)$ we obtain the upper bound on the $q$-entropy
$$S_q(\rho\|\sigma)\leq\frac{1}{1-q} \|\rho-\sigma\|_1\lambda_\rho^q \frac{\alpha_\rho^{1-q}-\alpha_\sigma^{1-q}}{\alpha_\rho-\alpha_\sigma}\leq\|\rho-\sigma\|_1\frac{\lambda_\rho^q}{\alpha^q}\ . $$
This bound is clearly an improvement of (\ref{eq:Rless}), since it improves the constant.

If the states $\rho$ and $\sigma$ are two-dimensional, then from Theorem \ref{thm:main} (see Section \ref{sec:Ts-qubits}), we obtain
$$ S_{q}(\rho\|\sigma)\leq \frac{1}{1-q}\,\|\rho-\sigma\|_1\lambda_\rho^{q} \, \frac{\lambda_\rho^{1-q}-\alpha_\sigma^{1-q}}{\lambda_\rho-\alpha_\sigma}\leq \|\rho-\sigma\|_1\, \frac{\lambda_\rho^{q}}{\alpha_\sigma^{q}} \ . $$
\end{itemize}

\section{Preliminaries}
\subsection{Operator monotone functions}
\begin{definition}\label{def:op-mon}
A function $f: (a,b) \to \R$ is {\em operator monotone} if for any pair of  self-adjoint operators 
 $A$ and $B$ on some 
 Hilbert space that have spectrum in $(a,b)$, the operator $$f(A) -f(B)\geq0$$ is positive semidefinite whenever $A - B\geq 0$ is positive semidefinite. We say that $f$ is {\it operator monotone decreasing} on $(a,b)$ in case $-f$ is 
 operator monotone.
\end{definition}

\begin{definition}\label{def:convex}
A function  $f: (a,b) \to \R$ is {\em operator concave} on the positive operators, when  for all positive semidefinite 
operators 
 $A$ and $B$ on some 
 Hilbert space that have spectrum in $(a,b)$ and all $\lambda$ in $(0,1)$, $$ f((1-\lambda)A + 
\lambda B))-(1-\lambda)f(A) - \lambda f(B)\geq 0$$
is positive semidefinite. A function $f$ is {\em operator convex} if $-f$ is operator concave.
\end{definition}

\begin{theorem}[Bhatia '97 ]\cite[Theorem V.2.5]{B97}\label{thm:Bhatia}
Every operator monotone function $f:[0,\infty)\rightarrow \mathbb{R}$ is operator concave. Moreover, every continuous function $f$ mapping $[0,\infty)\rightarrow [0,\infty)$ into itself is operator monotone if and only if it is operator concave. 
\end{theorem}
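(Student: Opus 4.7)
My plan is to split the theorem into its two assertions and handle each by a standard route.

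For the first assertion --- that every operator monotone $f : [0,\infty) \to \R$ is operator concave --- I will invoke the L\"owner/Herglotz--Nevanlinna integral representation. Any such $f$ extends to a Pick function on the upper half-plane and admits the representation
$$f(x) \;=\; \alpha + \beta x + \int_0^\infty \frac{\lambda x}{\lambda+x}\, d\mu(\lambda),$$
with $\alpha \in \R$, $\beta \geq 0$, and $\mu$ a positive Borel measure on $(0,\infty)$ satisfying a suitable growth condition. The elementary kernel $h_\lambda(x)=\tfrac{\lambda x}{\lambda+x}=\lambda-\tfrac{\lambda^2}{\lambda+x}$ is operator concave because $x\mapsto(\lambda+x)^{-1}$ is operator convex on $(0,\infty)$, a textbook fact that I would establish via $2\times 2$ block positivity (the Schur-complement characterization of joint convexity). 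Constants and linear functions are operator concave, and integrating a family of operator concave functions against a positive measure preserves operator concavity; so $f$ is operator concave. This also settles the ``only if'' direction of the second assertion, since $[0,\infty)\subset\R$.

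For the remaining direction --- a continuous operator concave $f : [0,\infty) \to [0,\infty)$ is operator monotone --- I will use the following short convex-combination trick. Given $0\leq A \leq B$, for any $t\in(0,1)$ the operator
$$Y_t \;:=\; \frac{B-tA}{1-t} \;=\; A + \frac{B-A}{1-t}$$
is positive semidefinite, and $B = tA + (1-t)Y_t$ is a genuine operator convex combination. Operator concavity then yields
$$f(B) \;\geq\; t\, f(A) + (1-t)\, f(Y_t) \;\geq\; t\, f(A),$$
where the second inequality crucially uses the hypothesis $f\geq 0$. Letting $t\uparrow 1$ and invoking continuity of $f$ on $[0,\infty)$ gives $f(B)\geq f(A)$, i.e.\ operator monotonicity.

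The only genuinely hard step in this plan is the L\"owner integral representation itself, which is a nontrivial complex-analytic theorem and which I would cite rather than reprove. Everything else is elementary: the operator convexity of $x\mapsto(\lambda+x)^{-1}$ is standard, and the concavity-to-monotonicity step is a one-line convex-combination argument. It is worth highlighting that the nonnegativity assumption $f\geq 0$ is used exactly once, in discarding the term $(1-t)f(Y_t)$; this explains why the ``if and only if'' in the second assertion must restrict to functions mapping into $[0,\infty)$ rather than merely into $\R$.
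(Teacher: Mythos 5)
The paper does not prove this statement at all---it is quoted verbatim from Bhatia's \emph{Matrix Analysis} (Theorem V.2.5) and used as a black box---so there is no in-paper argument to compare against; I can only assess your proposal on its own terms, and it is correct. Your two halves are both standard and sound: the representation $f(x)=\alpha+\beta x+\int_0^\infty \frac{\lambda x}{\lambda+x}\,d\mu(\lambda)$ with $\beta\ge 0$ does hold for operator monotone functions on $[0,\infty)$, the kernel $\lambda-\frac{\lambda^2}{\lambda+x}$ is operator concave because $x\mapsto(\lambda+x)^{-1}$ is operator convex, and positive integration preserves operator concavity; and the converse via $B=tA+(1-t)Y_t$ with $Y_t=A+\frac{B-A}{1-t}\ge 0$, discarding $(1-t)f(Y_t)\ge 0$ and letting $t\uparrow 1$, is exactly the classical concavity-plus-positivity-implies-monotonicity trick (note that continuity is not actually needed in that limit, since $tf(A)\to f(A)$ regardless). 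It is worth pointing out that your route for the first assertion differs from Bhatia's own: he proves operator concavity of operator monotone functions directly by an elementary $2\times 2$ block-matrix/unitary-conjugation argument in Section V.2, \emph{before} the L\"owner integral representation is established in Section V.4, whereas you front-load the heavy complex-analytic machinery. Your approach buys a shorter write-up and meshes well with this paper, which already relies on the representation \eqref{low}--\eqref{low2} throughout; Bhatia's buys logical economy, since it avoids any risk of circularity with how the representation theorem is itself derived. Either way, the argument is complete modulo the cited representation theorem.
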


\begin{example} 
Note that
\begin{itemize} 
\item $f(x)=\log x$ is operator monotone;
\item $f(x)=x\log x$ is operator convex.
\end{itemize}
\end{example}

\begin{example}\label{ex-power}
Let $f(x)=x^p$, where ${p}\in\bR$. Then by \cite[Theorem V.2.10]{B97} the function $f$ is 
\begin{enumerate}
\item operator monotone and operator concave if and only if $p\in[0,1]$;
\item operator convex if and only if $p\in[-1,0]\cup[1,2]$;
\item operator monotone decreasing and operator convex  if and only if $p\in[-1,0]$.
\end{enumerate}
\end{example}

\begin{definition}\label{def:pick}
 A {\em Pick function} is a function $f$ that is analytic on the upper half plane and has a positive imaginary part. The set of Pick functions on $(a,b)$ is denoted as $\mathcal{P}_{(a,b)}.$
\end{definition}

\begin{theorem}[L\"{o}wner '34] \cite[Theorem V.4.7]{B97} A function $f$ on $(a,b)$ is operator monotone if and only if $f$ is a restriction of a Pick function $f\in\mathcal{P}_{(a,b)}$ to $(a,b)$.
\end{theorem}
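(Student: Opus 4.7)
I plan to prove the two implications separately. For the forward direction (Pick extension implies operator monotonicity) I would use the Herglotz--Nevanlinna representation: every Pick function admits the canonical form
\[
f(z) = \alpha + \beta z + \int_{\R\setminus (a,b)} \Bigl( \frac{1}{\lambda - z} - \frac{\lambda}{1+\lambda^2} \Bigr)\, d\mu(\lambda),
\]
with $\alpha \in \R$, $\beta \geq 0$, and $\mu$ a positive measure supported off $(a,b)$ (the support restriction coming from the real analyticity of $f$ across $(a,b)$). The constant and linear pieces are operator monotone. For each fixed $\lambda \notin (a,b)$, the resolvent $z \mapsto (\lambda - z)^{-1}$ is operator monotone on $(a,b)$ via the identity
\[
(\lambda I - B)^{-1} - (\lambda I - A)^{-1} = (\lambda I - B)^{-1}(A-B)(\lambda I - A)^{-1},
\]
which is positive semidefinite when $A \geq B$ have spectra in $(a,b)$ since $\lambda$ lies outside both spectra. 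Integrating against $d\mu$ preserves positivity, so $f$ is operator monotone on $(a,b)$.

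For the converse, I would follow L\"{o}wner's original divided-difference strategy. Set $f^{[1]}(x,y) := (f(x)-f(y))/(x-y)$ for $x \neq y$ and $f^{[1]}(x,x) := f'(x)$, and define the L\"{o}wner matrix $L_f(x_1,\ldots,x_n)$ with entries $L_f(x_1,\ldots,x_n)_{ij} = f^{[1]}(x_i,x_j)$. The key structural lemma asserts that operator monotonicity of $f$ on $(a,b)$ forces $f \in C^1((a,b))$ and $L_f(x_1,\ldots,x_n) \geq 0$ for every finite collection of distinct points in $(a,b)$. This is obtained by differentiating the scalar inequality $\langle v, [f(A+tP) - f(A)]v\rangle \geq 0$ in $t$ at $t=0$ along rank-one positive perturbations $P$ and expressing the derivative in the eigenbasis of $A$; the resulting bilinear form is precisely controlled by $L_f$. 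A bootstrap using $C^1$-regularity plus operator monotonicity then upgrades $f$ to real analyticity on $(a,b)$.

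The final and most delicate step is to extract a Pick extension from this family of positive matrices. Concentrating the test points near a single $x_0 \in (a,b)$ and passing to a limit converts positivity of $L_f$ into positivity of the Hankel matrix of Taylor coefficients of $f$ at $x_0$; Hamburger's solution to the moment problem then supplies a positive measure $\nu_{x_0}$ whose Cauchy transform reproduces $f$ in a neighborhood of $x_0$. Varying $x_0$ and applying Vitali-type normal-family arguments on Pick functions patches these local Cauchy-transform representations into a single function analytic on $\bC \setminus \R$ with nonnegative imaginary part, which by construction restricts to $f$ on $(a,b)$. The main obstacle is precisely this patching step: encoding the full function from moment-problem data at a single base point, and verifying that the resulting analytic continuation is independent of the choice of base point, requires the analytic machinery of Pick-function normal families together with the uniqueness of analytic continuation.
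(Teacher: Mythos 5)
The paper does not prove this statement: it is quoted from Bhatia (Theorem V.4.7) as the classical L\"{o}wner theorem, so there is no in-paper argument to compare yours against, and your attempt must be judged on its own. Your forward direction follows the standard route, but the justification of the key step is flawed as written. The identity should read $(\lambda I - A)^{-1} - (\lambda I - B)^{-1} = (\lambda I - A)^{-1}(A - B)(\lambda I - B)^{-1}$, and even in corrected form this factorization does not exhibit positivity: the two outer factors are \emph{different} self-adjoint operators, so the right-hand side is not a congruence $X^* P X$ of the positive operator $A-B$. The standard repair is to first prove that $t \mapsto -t^{-1}$ is operator monotone on $(0,\infty)$ via $A \geq B > 0 \Rightarrow B^{-1/2} A B^{-1/2} \geq I \Rightarrow B^{1/2} A^{-1} B^{1/2} \leq I \Rightarrow A^{-1} \leq B^{-1}$, and then compose with the affine map $z \mapsto \pm(\lambda - z)$ according to whether $\lambda \geq b$ or $\lambda \leq a$.

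The converse is where the real content of L\"{o}wner's theorem lives, and there your text is a roadmap rather than a proof: each step you name is itself a substantial theorem that is asserted rather than established. (i) That operator monotonicity forces $f \in C^1$ (and ultimately smoothness) requires a mollification argument together with positivity of low-order L\"{o}wner matrices; it does not follow from ``differentiating the scalar inequality,'' which presupposes the differentiability being proved. (ii) Passing from positivity of L\"{o}wner matrices at coalescing nodes to positivity of the Hankel matrix of Taylor coefficients presupposes that all derivatives exist and requires a careful limit. (iii) The Hamburger moment problem need not be determinate, and one must separately argue that the Cauchy transform of the extracted measure actually reproduces $f$ near $x_0$. (iv) The patching of the local representations into a single Pick extension --- which you yourself flag as the main obstacle --- is left unresolved. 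The architecture matches the classical L\"{o}wner--Donoghue proof, but as written the hard direction has genuine gaps at every stage and would only close by importing these results from the literature, which is effectively what the paper does by citing Bhatia.
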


\begin{corollary}
A function $f$ on $(0,\infty)$ is operator monotone decreasing  if and only if $-f\in\mathcal{P}_{(0,\infty)}$.
\end{corollary}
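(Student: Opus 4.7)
The statement is an immediate corollary chaining the definition of ``operator monotone decreasing'' with L\"{o}wner's theorem applied to $(a,b) = (0,\infty)$, so there is no real obstacle and no substantive calculation. My plan is simply to record the two equivalences and compose them.

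First I would invoke Definition \ref{def:op-mon}: by definition, $f$ is operator monotone decreasing on $(0,\infty)$ if and only if $-f$ is operator monotone on $(0,\infty)$. This step is purely definitional and requires nothing beyond pointing at the definition.

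Next I would apply L\"{o}wner's theorem (the theorem immediately preceding the corollary, with the interval taken to be $(a,b)=(0,\infty)$) to the function $-f$: it says that $-f$ is operator monotone on $(0,\infty)$ if and only if $-f$ extends to, equivalently is the restriction of, a Pick function in $\mathcal{P}_{(0,\infty)}$. Concatenating the two ``if and only if'' statements yields the claimed equivalence: $f$ is operator monotone decreasing on $(0,\infty)$ iff $-f \in \mathcal{P}_{(0,\infty)}$.

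The only thing to be a little careful about is the mild abuse of notation in writing $-f \in \mathcal{P}_{(0,\infty)}$: strictly, $\mathcal{P}_{(0,\infty)}$ denotes Pick functions whose domain is the upper half-plane, and we are identifying $-f$ with its (unique) Pick extension. This identification is exactly the one made in the statement of L\"{o}wner's theorem above, so no additional work is needed. I would therefore write the proof as a two-line composition of these equivalences, with no computation and no case analysis.
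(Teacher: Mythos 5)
Your proposal is correct and is exactly the intended argument: the paper states this corollary without proof precisely because it follows by composing Definition~\ref{def:op-mon} (operator monotone decreasing means $-f$ is operator monotone) with L\"{o}wner's theorem applied to $-f$ on $(0,\infty)$. Your remark about identifying $-f$ with its Pick extension is a fair clarification but matches the convention already used in the paper's statement of L\"{o}wner's theorem.
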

Denote the set of operator monotone decreasing functions $f$ (i.e.$-f\in\mathcal{P}_{(0,\infty)}$) as $\mathcal{Q}_{(0,\infty)}$.

\begin{example}
From \cite[Exercise V.4.8]{B97} The following functions belong to $\mathcal{Q}_{(0,\infty)}$: 
\begin{itemize}
\item $f(x)=-\log x$, 
\item $f(x)=-x^p$ for $p\in[0,1]$, 
\item $f(x)=x^p$ for $p\in[-1,0]$.
\end{itemize}
\end{example}

According to \cite[Chapter II, Theorem I]{Dono} every function  $f\in\mathcal{Q}_{(0,\infty)}$, has a canonical integral representation 
\begin{equation}\label{low}
f(x) = -a x- b +\int_{0}^\infty \left(  \frac{1}{t +x }-\frac{t}{t^2+1}   \right){\rm d}\mu_f(t)\ ,
\end{equation}
where  $a:=-\lim_{y\uparrow\infty}\frac{f(iy)}{iy}\geq 0$, $b:=-\Re f(i)\in\bR$ and $\mu$ is a positive measure on $(0,\infty)$ such that 
${\displaystyle \int_{0}^\infty  \frac{1}{t^2+1}{\rm d}\mu_f(t)<\infty}$, and
 \begin{equation}\label{muform}
 \mu_f(x_1) - \mu_f(x_0) = -\lim_{y\downarrow 0} \frac{1}{\pi} \int_{x_0}^{x_1} \Im f(-x+iy){\rm d} x\ .
 \end{equation}
Conversely, every such function belongs in $\mathcal{Q}_{(0,\infty)}$.

We consider functions $f\in\mathcal{Q}_{(0,\infty)}$ such that $f(1)=0$. The last condition is equivalent to
$$0=f(1) = -a - b +\int_{0}^\infty \left(  \frac{1}{t +1 }-\frac{t}{t^2+1}   \right){\rm d}\mu_f(t)\ ,
 $$
 in other words,
 \begin{equation}\label{eq:f10}
 a+b=\int_{0}^\infty \left(  \frac{1}{t +1 }-\frac{t}{t^2+1}   \right){\rm d}\mu_f(t)\ .
 \end{equation}
 Therefore, the operator monotone decreasing function $f$ such that $f(1)=0$ has the following integral representation
 \begin{equation}\label{low2}
f(x) = a(1-x)+ \int_{0}^\infty \left(  \frac{1}{t +x }- \frac{1}{t +1 }\right){\rm d}\mu_f(t)\ .
\end{equation}

\begin{example}\label{ex-power}
Consider the power function $f(x)=-x^p$ for $p\in(0,1)$. It is operator monotone decreasing. Then 
$$a = -\lim_{y\uparrow\infty}f(iy)/(iy) = 0\ , \ \ \text{and } \ b = {\cos}(p \pi/2)\ .$$
For $x>0$, $\lim_{y\downarrow 0} \Im f(-x + iy) = -x^p\sin(p \pi)$ so that
$${\rm d}\mu(x) = \pi^{-1}\sin(p \pi) x^p{\rm d}x\ .$$ This yields the representation
\begin{equation}\label{eq:powex}
-x^p =  -{\cos}(p \pi/2)  + \frac{\sin(p \pi)}{\pi} \int_{0}^\infty t^p\left(  
\frac{1}{t +x } -\frac{t}{t^2+1}  \right){\rm d}t \ .
\end{equation}

\begin{example}\label{ex-log} Let $f(x)=-\log(x)$. It is operator monotone decreasing.
Then 
$$b=\Re(\log(i)) = 0\ ,$$ and 
$$a=\lim_{y\uparrow\infty}\log(iy)/(iy) = \lim_{y\uparrow\infty}(\log y + i\pi/2)/(iy) =0\ . $$ It is clear from \eqref{muform} that
$${\rm d}\mu(x) = \frac{1}{\pi}\lim_{y\downarrow 0}\Im\log(-x + iy){\rm d}x = {\rm d}x\ .$$
Then the  integral representation (\ref{low}) gives the following formula for the logarithmic function
\begin{equation}\label{eq:logex}
-\log x=\int_{0}^\infty \left( \frac{1}{t +x } - \frac{t}{t^2+1}  \right){\rm d}t \ ,
\end{equation}
which is also obvious from the direct computation of the integral.
\end{example}
\end{example}

\subsection{Quasi-relative entropy}
\begin{definition}\label{def:qre}
For an operator convex function $f$, such that $f(1)=0$, and strictly positive states $\rho$ and $\sigma$ acting on a finite-dimensional Hilbert space $\cH$, {\it the quasi-relative entropy} (or sometimes referred to as {\it the $f$-divergence}) is defined as 
$$S_f(\rho|| \sigma)=\Tr(f(\Delta_{\sigma,\rho}){\rho})\ ,$$
where the relative modular operator, introduced by Araki \cite{A76}, $$\Delta_{A,B}(X)=L_AR_B^{-1}(X)=AXB^{-1}$$ is a product of left and right multiplication operators, $L_A(X)=AX$ and $R_B(X)=XB$. Throughout this paper we consider finite-dimensional setup, so the operators are invertible. (In general, $A^{-1}$ is stands for the generalized inverse of $A$.)
\end{definition}

There right and left multiplication operators have the following properties \cite{JR10}
\begin{enumerate}
\item They commute, i.e. $$[L_A, R_B]=0, $$
since $$L_AR_B(X)=AXB=R_BL_A(X)\ . $$
\item The operators $L_A$ and $R_A$ are invertible if and only if $A$ is non-singular, giving $L_A^{-1}=L_{A^{-1}}$ and $R_A^{-1}=R_{A^{-1}}$. 
\item If $A$ is self-adjoint, then $L_A$ and $R_A$ are both self-adjoint with respect to the Hilbert Schmidt inner product. 
\item If $A\geq 0$, then  $L_A$ and $R_A$ are positive semi-definite, i.e.
$$\Tr X^*L_A(X)=\Tr X^*AX\geq 0 $$
and 
$$\Tr X^*R_A(X)=\Tr X^*XA=\Tr X^*AX\geq 0\ . $$
\item If $A>0$, for any function $f(0,\infty)\rightarrow \mathbb{R}$, we have $f(L_A)=L_{f(A)}$ and $f(R_A)=R_{f(A)}$. This follows from the spectral decomposition of $A$, denoted as $A=\sum_{j=1}^d \lambda_j \ket{j}\bra{j}$. Then for any $j,k=1,\dots, d$ the operator $\ket{j}\bra{k}$ is the eigenstate of the operator $L_A$ (and $R_A$) with eigenvalue $\lambda_j$ (or $\lambda_k$). The later has degeneracy $d$
$$L_A\ket{j}\bra{k}=\lambda_j\ket{j}\bra{k},\ \ R_A\ket{j}\bra{k}=\lambda_k\ket{j}\bra{k}\ .$$
Therefore,
$$f(L_A)\ket{j}\bra{k}=f(\lambda_j)\ket{j}\bra{k},\ \ f(R_A)\ket{j}\bra{k}=f(\lambda_k)\ket{j}\bra{k}\ . $$
\end{enumerate}

There is a straightforward way to calculate the quasi-relative entropy from the spectral decomposition of states. Let $\rho$ and $\sigma$ have the following spectral decomposition
\begin{equation}\label{eq:spectral}
\rho=\sum_j\lambda_j\ket{\psi_j}\bra{\psi_j}, \ \ \sigma=\sum_k\mu_k\ket{\phi_k}\bra{\phi_k}\ ,
\end{equation}
where the eigenvalues are ordered: 
$$\lambda_n\leq \dots\leq \lambda_1, \ \ \mu_n\leq \dots\leq \mu_1\ . $$
 the set $\{\ket{\phi_k}\bra{\psi_j}\}_{j,k}$ forms an orthonormal basis of $\cB(\cH)$, the space of bounded linear operators, with respect to the Hilbert-Schmidt inner product defined as $\langle A, B \rangle=\Tr(A^*B)$. By \cite{V16}, the modular operator can be written as
\begin{equation}\label{eq:modular}
\Delta_{\sigma, \rho}=\sum_{j,k} \frac{\mu_k}{\lambda_j}P_{j,k}\ ,
\end{equation}
where $P_{j,k}:\cB(\cH)\rightarrow\cB(\cH)$ is defined by
$$P_{j,k}(X)=\ket{\psi_j}\bra{\phi_k}\bra{\psi_j}X\ket{\phi_k}\ . $$
The quasi-relative entropy is calculated as follows
\begin{equation}\label{eq:formula}
S_f(\rho||\sigma)=\sum_{j,k}\lambda_j f\left(\frac{\mu_k}{\lambda_j}\right)|\bra{\phi_k}\ket{\psi_j}|^2\ . 
\end{equation}

\begin{example} For $f(x)=-\log x$, the quasi-relative entropy becomes the Umegaki relative entropy
$$S_{-\log}(\rho\|\sigma)=S(\rho\|\sigma)=\Tr (\rho\log\rho-\rho\log\sigma)\ . $$
\end{example}

\begin{example}
For $p\in(-1,2)$ and $p\neq 0,1$ let us take the function 
$$f_p(x):=\frac{1}{p(1-p)}(1-x^p)\ ,$$
which is {operator} convex. The quasi-relative entropy for this function is calculated to be
$$S_{f_p}(\rho|| \sigma)=\frac{1}{p(1-p)}\left(1-\Tr(\sigma^{p}\rho^{1-p})\right)\ .$$
\end{example}

\begin{example}
For $p\in(-1,1)$ take $q=1-p\in(0,2)$, the function
$$f_q(x)=\frac{1}{1-q}(1-x^{1-q}) $$
is operator convex. The quasi-relative entropy for this function is known as Tsallis $q$-entropy
$$S_q(\rho\|\sigma)= \frac{1}{1-q}\left(1-\Tr(\rho^{q}\sigma^{1-q})\right)\ .$$
\end{example}

\section{Upper continuity bound for qubits}

Consider a case when $\rho$ and $\sigma$ are $2$-dimensional states, or diagonalizable in the same basis states on $d$-dimensional Hilbert space. Then for any operator monotone decreasing function $f$ the following upper bound holds.

\begin{theorem}\label{thm:main3}
Let $f\in\mathcal{Q}_{(0,\infty)}$ be an operator monotone decreasing function such that $f(1)=0$. Let $\rho$ and $\sigma$ be two strictly positive density operators on a $2$-dimensional Hilbert space (qubits), or states diagonalizable in the same basis on any finite-dimensional Hilbert space (classical states). Assume one of the two conditions: 1) $\rho$ is full rank; 2) $f$ is such that $a_f=0$ (defined below). Then 
\begin{equation}\label{eq:main}
S_f(\rho\|\sigma)\leq\|\rho-\sigma\|_1\left[\frac{\lambda_\rho}{\lambda_\rho-\alpha_\sigma} f(\lambda_\rho^{-1}\alpha_\sigma)-a_f\right] \ ,
\end{equation}
where

\begin{itemize}
\item $\lambda_\rho\in(0,1]$ is the largest eigenvalue of $\rho$,
\item $\alpha_\sigma\in(0,1]$ is the smallest eigenvalue of $\sigma$,
\item $a_f=-\lim_{y\uparrow \infty}\frac{f(iy)}{iy}.$
\end{itemize}
\end{theorem}

\begin{proof}  Every function $f\in\mathcal{Q}_{(0,\infty)}$ (i.e. operator monotone decreasing function), such that $f(1)=0$ admits an integral representation (\ref{low2}). Since $S_f(\rho\|\rho)=0= \Tr(f(\Delta_{\rho,\rho})\rho)$, we have
\begin{align}
S_f(\rho\|\sigma)&=\Tr\{(f(\Delta_{\sigma, \rho})-f(\Delta_{\rho, \rho}))\rho\}\\
&=a_f\Tr\{\Delta_{\rho,\rho}\rho\}-a_f\Tr\{\Delta_{\sigma,\rho}\rho\}+\int_0^\infty d\mu_f(t)\  \Tr\{\left((t\idty+\Delta_{\sigma,\rho})^{-1}-(t\idty+\Delta_{\rho,\rho})^{-1}\right)\rho\}\\
&=\int_0^\infty d\mu_f(t)\  \Tr\{\left((t\idty+\Delta_{\sigma,\rho})^{-1}-(t\idty+\Delta_{\rho,\rho})^{-1}\right)\rho\}\ .
\end{align}
The second equality is due to the fact that either $a_f=0$ or $\rho$ is full rank.

The formula $A^{-1}-B^{-1}=A^{-1}(B-A)B^{-1}$ holds for any invertible operators $A$ and $B$. Using the fact that the modular operator is the product of left and right multiplications, $\Delta_{\sigma,\rho}=L_\sigma R_{\rho^{-1}}$, we obtain
\begin{align}
S_f(\rho\|\sigma)&=\int_0^\infty d\mu_f(t)\  \Tr\{\left((t\idty+\Delta_{\sigma,\rho})^{-1}(L_\rho-L_\sigma)R_{\rho^{-1}}(t\idty+\Delta_{\rho,\rho})^{-1}\right)\rho\}\\
&=\int_0^\infty d\mu_f(t)\  \Tr\{\left((t\idty+\Delta_{\sigma,\rho})^{-1}(L_\rho-L_\sigma)(t\idty+\Delta_{\rho,\rho})^{-1}\right)(I)\}\ .
\end{align}

From (\ref{eq:modular}), the last trace can be written as a trace of a product of two matrices:
$$S_f(\rho\|\sigma)=\int_0^\infty d\mu_f(t)\ (t+1)^{-1}\Tr\{D_t(\rho-\sigma)\}\ ,$$
where, with the spectral decomposition (\ref{eq:spectral}) of $\rho$ and $\sigma$,
$$D_t=\sum_{jk} \left(t+ \frac{\mu_k}{\lambda_j}\right)^{-1}\bra{\psi_j}\ket{\phi_k}\ket{\psi_j}\bra{\phi_k}\ .$$

In Lemma \ref{lemma}  take $X=\rho-\sigma$, $D=D_t$ and $C=\max_{kj}\left(t+ \frac{\mu_k}{\lambda_j}\right)^{-1}=  (t+\lambda_\rho^{-1}\alpha_\sigma)^{-1}$. Then in both cases for states $\rho$ and $\sigma$ specified in Theorem, we obtain
$$
\left|\Tr\{D_t(\rho-\sigma)\} \right| \leq (t+\lambda_\rho^{-1}\alpha_\sigma)^{-1} \|\rho-\sigma\|_1\ .$$
Therefore, in both cases,
$$S_f(\rho\|\sigma)\leq  \|\rho-\sigma\|_1 \int_0^\infty \frac{1}{t+\lambda_\rho^{-1}\alpha_\sigma}\, \cdot\, \frac{1}{t+1}\, d\mu_f(t)\ ,$$
Note that
$$\frac{1}{t+\lambda_\rho^{-1}\alpha_\sigma}\, \cdot\, \frac{1}{t+1}=\frac{\lambda_\rho}{\lambda_\rho-\alpha_\sigma}\left\{\frac{1}{t+\lambda_\rho^{-1}\alpha_\sigma} -\frac{1}{t+1}\right\}\ . $$
Therefore,
\begin{align}
S_f(\rho\|\sigma)&\leq  \|\rho-\sigma\|_1 \frac{\lambda_\rho}{\lambda_\rho-\alpha_\sigma}\int_0^\infty \left\{\frac{1}{t+\lambda_\rho^{-1}\alpha_\sigma}- \frac{1}{t+1}\right\}\, d\mu_f(t)\\
&=\|\rho-\sigma\|_1 \frac{\lambda_\rho}{\lambda_\rho-\alpha_\sigma} \left[f(\lambda_\rho^{-1}\alpha_\sigma)-a_f(1-\lambda_\rho^{-1}\alpha_\sigma)\right] \ ,
\end{align}
where the last equation is obtained from the integral representation (\ref{low2}) of function $f$. Here, recall, $a_f=-\lim_{y\uparrow \infty}\frac{f(iy)}{iy}.$
\end{proof}

\begin{lemma}\label{lemma}
For  orthogonal bases $\{\ket{\psi_j}\}$ and $\{\ket{\phi_k}\}$, let 
\begin{equation}\label{eq:D1}
D=\sum_{kj}C_{kj}\bra{\psi_j}\ket{\phi_k}\ket{\psi_j}\bra{\phi_k}\ ,
\end{equation}
such that $0\leq C_{kj}\leq C$ for all $k,j$ and some $C$.
Consider two cases:
\begin{itemize}
\item Let $X$ be a diagonal matrix in either basis: without loss of generality let $X=\sum_k x_k \ket{\phi_k}\bra{\phi_k}$.
\item Let $X$ be a $2\times 2$ Hermitian traceless matrix, i.e. $X^*=X$ and $\Tr(X)=0$.
\end{itemize}
In both cases, 
$$\left|\Tr(DX)\right|\leq C\|X\|_1\ .$$
\end{lemma}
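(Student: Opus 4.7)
\emph{The diagonal case.} Suppose $X=\sum_\ell x_\ell\ket{\phi_\ell}\bra{\phi_\ell}$. Then $\bra{\phi_k}X\ket{\psi_j}=x_k\bra{\phi_k}\ket{\psi_j}$, so
$$\Tr(DX)=\sum_{k,j}C_{kj}\,|\bra{\psi_j}\ket{\phi_k}|^2\,x_k.$$
The triangle inequality combined with $0\leq C_{kj}\leq C$ and the resolution of identity $\sum_j|\bra{\psi_j}\ket{\phi_k}|^2=1$ immediately gives $|\Tr(DX)|\leq C\sum_k|x_k|=C\|X\|_1$. The case when $X$ is diagonal in $\{\ket{\psi_j}\}$ is symmetric.

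\emph{The $2\times 2$ traceless Hermitian case.} Since $X$ is traceless Hermitian on a two-dimensional space with eigenvalues $\pm\lambda$ and $\|X\|_1=2\lambda$, I can write $X=2\lambda\ket{e}\bra{e}-\lambda\,\idty$ for some unit vector $\ket{e}$, whence
$$\Tr(DX)=\lambda\,\bigl(2\bra{e}D\ket{e}-\Tr(D)\bigr).$$
It suffices to prove $|2\bra{e}D\ket{e}-\Tr(D)|\leq 2C$. The plan is to pass to the matrix elements $D_{k\ell}=\bra{\phi_k}D\ket{\phi_\ell}$ in the $\phi$-basis. A direct computation from the definition of $D$ shows that the diagonals $D_{\ell\ell}=\sum_j C_{\ell j}|\bra{\psi_j}\ket{\phi_\ell}|^2$ are convex combinations of $\{C_{\ell j}\}_j$, so $D_{\ell\ell}\in[0,C]$ and $|D_{11}-D_{22}|\leq C$.

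The main obstacle is controlling the off-diagonals $D_{12}$ and $D_{21}$. The key step is to exploit the orthogonality $\sum_j\overline{\bra{\psi_j}\ket{\phi_1}}\,\bra{\psi_j}\ket{\phi_2}=\bra{\phi_1}\ket{\phi_2}=0$; in dimension two this is a two-term identity, so subtracting $C_{22}$ from each $C_{2j}$ inside $D_{12}=\sum_j C_{2j}\,\bra{\psi_j}\ket{\phi_2}\,\overline{\bra{\psi_j}\ket{\phi_1}}$ leaves $D_{12}$ unchanged and collapses the sum to a single term, yielding
$$|D_{12}|=|C_{21}-C_{22}|\,|\bra{\psi_1}\ket{\phi_1}|\,|\bra{\psi_1}\ket{\phi_2}|\leq C\sqrt{p(1-p)}\leq \frac{C}{2},$$
where $p=|\bra{\psi_1}\ket{\phi_1}|^2$ and the identity $|\bra{\psi_1}\ket{\phi_1}|^2+|\bra{\psi_1}\ket{\phi_2}|^2=1$ follows from $\sum_k\ket{\phi_k}\bra{\phi_k}=\idty$. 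The same argument gives $|D_{21}|\leq C/2$.

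Finally, expand $\ket{e}=\alpha\ket{\phi_1}+\beta\ket{\phi_2}$ with $|\alpha|^2+|\beta|^2=1$, so that
$$2\bra{e}D\ket{e}-\Tr(D)=(|\alpha|^2-|\beta|^2)(D_{11}-D_{22})+2\bar\alpha\beta\,D_{12}+2\bar\beta\alpha\,D_{21}.$$
The triangle inequality, together with $||\alpha|^2-|\beta|^2|\leq 1$, $2|\alpha||\beta|\leq 1$, $|D_{11}-D_{22}|\leq C$, and $|D_{12}|+|D_{21}|\leq C$, yields the required bound $\leq 2C$, completing the proof.
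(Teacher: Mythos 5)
Your proof is correct. The diagonal case is identical to the paper's. For the $2\times 2$ traceless Hermitian case you take a genuinely different route: the paper computes $\|X\|_1^2=2\sum_{ij}|x_{ij}|^2$ from the identity $(\omega_1+\omega_2)^2=\Tr(X^*X)+2|\det X|$, writes $D=\sum_j\ket{\psi_j}\bra{\psi_j}\Gamma^j$ with $\Gamma^j=\sum_k C_{kj}\ket{\phi_k}\bra{\phi_k}$, and applies Cauchy--Schwarz together with $\Gamma^j\le C\,\idty$ to get $|\Tr(DX)|^2\le \ff{1}{2}\|X\|_1^2\sum_j\bra{\psi_j}(\Gamma^j)^2\ket{\psi_j}\le C^2\|X\|_1^2$. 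You instead spectrally decompose $X=2\lambda\ket{e}\bra{e}-\lambda\idty$ and bound the matrix entries of $D$ in the $\phi$-basis directly, using the two-term orthogonality relation to collapse each off-diagonal entry to a single summand and get $|D_{12}|,|D_{21}|\le C\sqrt{p(1-p)}\le C/2$; I checked the gauge-shift step (subtracting $C_{22}$ changes nothing because $\sum_j\overline{\bra{\psi_j}\ket{\phi_1}}\bra{\psi_j}\ket{\phi_2}=0$) and the final triangle-inequality assembly, and both are sound, including your correct treatment of $D_{12}$ and $D_{21}$ as independent (since $D$ need not be Hermitian). Your argument is more elementary (no Cauchy--Schwarz, only triangle inequalities) and makes visible exactly where two-dimensionality enters (the collapse of the off-diagonal sums), whereas the paper's argument is shorter to write and its Cauchy--Schwarz template is closer in spirit to what one would try in higher dimensions; neither extends to general $d$, which is why the paper resorts to the $\sqrt{d}$ factor in Theorem 3.3 and states Conjecture 3.4.
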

\begin{proof}
1. Since $X$ is diagonal matrix, the trace norm is the sum of the absolute values of the eigenvalues
$$\|X\|_1=\Tr|X|=\sum_j|x_j|\ . $$
The trace can be calculated 
$$\Tr\{DX\}=\sum_{kj} C_{kj}x_k|\bra{\psi_j}\ket{\phi_k}|^2 . $$
Therefore,
$$\left|\Tr\{DX)\} \right| \leq \sum_{kj}C_{kj}|x_k||\bra{\psi_j}\ket{\phi_k}|^2\leq C\sum_{kj}|x_k||\bra{\psi_j}\ket{\phi_k}|^2\leq C\sum_k|x_k|= C \|X\|_1\ .$$

2. Assume that $X$ is a $2\times2$ Hermitian traceless matrix, such that
$$X=\sum_{ij}x_{ij} \ket{\psi_i}\bra{\psi_j}\ .$$
 First, let us compute the trace norm of $X$. Let $\omega_1$ and $\omega_2$ be the singular values of $X$, then
\begin{align}
\|X\|_1^2&=(\omega_1+ \omega_2)^2\\
&=\Tr(X^*X)+2|\det(X)|\\
&=x_{11}^2+x_{22}^2+|x_{12}|^2+|x_{21}|^2+2\left| x_{11}x_{22}-x_{12}x_{21}\right|\\
&=x_{11}^2+x_{22}^2+|x_{12}|^2+|x_{21}|^2+2(x_{11}^2+|x_{12}|^2)\\
&=2\left(x_{11}^2+x_{22}^2+|x_{12}|^2+|x_{21}|^2 \right)\\
&=2\sum_{ij}|x_{ij}|^2\ .
\end{align}
Here we used that $0=\Tr(X)=x_{11}+x_{22}$, and $X^*=X$, so $x_{12}=\overline{x_{21}}$.
On the other hand, let us denote a diagonal matrix
$$\Gamma^j=\sum_kC_{kj}\ket{\phi_k}\bra{\phi_k}\ . $$
Then $ D=\sum_j\ket{\psi_j}\bra{\psi_j}\Gamma^j ,$ and therefore by Cauchy-Schwatz inequality
\begin{align}
\left|\Tr\{D X)\} \right|^2&=\left|\sum_{ji}x_{ij}\bra{\psi_j}\Gamma^j\ket{\psi_i} \right|^2\\
&\leq \left(\sum_{ji}|x_{ij}|^2\right)\left(\sum_{ij}\left| \bra{\psi_i}\Gamma^j\ket{\psi_j} \right|^2\right)\\
&=\frac{1}{2}\|X\|_1^2 \sum_{ij} \bra{\psi_j}\Gamma^j\ket{\psi_i} \bra{\psi_i}\Gamma^j\ket{\psi_j} \\
&=\frac{1}{2}\|X\|_1^2 \sum_j \bra{\psi_j}(\Gamma^j)^2\ket{\psi_j}\\
&\leq \|X\|_1^2\, C^2\ .
\end{align}
The last inequality follows from the fact that $\Gamma^j\leq C I$.
And therefore,
$$\left|\Tr\{DX\} \right| \leq C \|X\|_1\ .$$

\end{proof}

In the most general case, unfortunately, we are picking up a factor of $\sqrt{d}$ in the upper bound. Note that the only instance where the conditions on $\rho$ and $\sigma$ were used in the proof of Theorem \ref{thm:main3} are in the proof of the Lemma \ref{lemma}. In the most general case, 
$$|\Tr(DX)|\leq \|DX\|_1\leq \|X\|_1\|D\|_\infty\leq \|X\|_1\|D\|_2\ . $$
And from the structure of $D$ in (\ref{eq:D1}),
$$\|D\|_2^2=\Tr(D^*D)=\sum_{kj}C_{kj}^2|\bra{\psi_j}\ket{\phi_k}|^2\leq C^2 d\ .$$
And therefore, using this result in the proof of Theorem \ref{thm:main3}, we obtain the following upper bound.

\begin{theorem}\label{thm:main}
Let $f\in\mathcal{Q}_{(0,\infty)}$ be an operator monotone decreasing function such that $f(1)=0$. Let $\rho$ and $\sigma$ be two strictly positive density operators on a $d$-dimensional Hilbert space. Assume one of the two conditions: 1) $\rho$ is full rank; 2) $f$ is such that $a_f=0$ (defined below). Then 
\begin{equation}\label{eq:main}
S_f(\rho\|\sigma)\leq\|\rho-\sigma\|_1\sqrt{d}\left[\frac{\lambda_\rho}{\lambda_\rho-\alpha_\sigma} f(\lambda_\rho^{-1}\alpha_\sigma)-a_f\right] \ ,
\end{equation}
where

\begin{itemize}
\item $\lambda_\rho\in(0,1]$ is the largest eigenvalue of $\rho$,
\item $\alpha_\sigma\in(0,1]$ is the smallest eigenvalue of $\sigma$,
\item $a_f=-\lim_{y\uparrow \infty}\frac{f(iy)}{iy}.$
\end{itemize}
\end{theorem}

We conjecture that the dimensionless bound holds in any dimension without any restriction on the states.
\begin{conjecture}\label{conjecture}
Let $\rho=\sum_j\lambda_j\ket{\psi_j}\bra{\psi_j},$ and $\sigma=\sum_k\mu_k\ket{\phi_k}\bra{\phi_k}$ be written in their spectral decomposition. Let
\begin{equation}\label{eq:D}
D=\sum_{kj}C_{kj}\bra{\psi_j}\ket{\phi_k}\ket{\psi_j}\bra{\phi_k}\ ,
\end{equation}
such that $0\leq C_{kj}\leq C$ for all $k,j$ and some $C$.  Then
$$|\Tr(D(\rho-\sigma))|\leq C\|\rho-\sigma\|_1\ . $$
\end{conjecture}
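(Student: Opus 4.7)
The plan is to reduce Conjecture \ref{conjecture} to a combinatorial inequality in the joint eigenvalue statistics of $\rho$ and $\sigma$, and then attack that inequality by a spectral-integral representation together with the Helstrom characterization of the trace distance. Because $\{C_{kj}\}\mapsto\Tr(D(\rho-\sigma))$ is linear, $|\Tr(D(\rho-\sigma))|$ is convex on the cube $[0,C]^{d\times d}$ and attains its maximum at a vertex, so without loss of generality $C_{kj}\in\{0,C\}$. A direct expansion using the spectral decompositions \eqref{eq:spectral} gives
\begin{equation*}
\Tr(D(\rho-\sigma))=\sum_{k,j}C_{kj}(\lambda_j-\mu_k)\,p_{kj},\qquad p_{kj}:=|\bra{\psi_j}\ket{\phi_k}|^2,
\end{equation*}
and the absolute-value-maximizing vertex is $C_{kj}=C\,\mathbb{1}[\lambda_j>\mu_k]$. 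Since $(p_{kj})$ is doubly stochastic, $\sum_{j,k}(\lambda_j-\mu_k)p_{kj}=0$, so the conjecture reduces to the scalar inequality
\begin{equation*}
\tfrac12\sum_{j,k}|\lambda_j-\mu_k|\,p_{kj}\;\le\;\|\rho-\sigma\|_1.
\end{equation*}

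For the second step I would rewrite the left-hand side as a spectral integral. Using the layer-cake identity $(\lambda_j-\mu_k)_+=\int_0^\infty\mathbb{1}[\mu_k<t<\lambda_j]\,dt$ together with the spectral projectors $P_\rho^{>t}:=\sum_{j:\lambda_j>t}\ket{\psi_j}\bra{\psi_j}$ and $P_\sigma^{<t}:=\sum_{k:\mu_k<t}\ket{\phi_k}\bra{\phi_k}$, one obtains
\begin{equation*}
\sum_{(j,k):\lambda_j>\mu_k}(\lambda_j-\mu_k)\,p_{kj}=\int_0^\infty\Tr\bigl(P_\rho^{>t}P_\sigma^{<t}\bigr)\,dt,
\end{equation*}
while the same layer cake applied to $(\rho-\sigma)_+$ gives $\|\rho-\sigma\|_1=2\int_0^\infty\Tr\bigl(P_{\rho-\sigma}^{>s}\bigr)\,ds$. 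My approach to closing the bound is to realize the first integral as $\Tr(M(\rho-\sigma))$ for a single test operator $0\le M\le I$ built by an appropriate integration of the products $P_\rho^{>t}P_\sigma^{<t}$, and then to invoke the Helstrom identity $\tfrac12\|\rho-\sigma\|_1=\sup_{0\le M\le I}\Tr(M(\rho-\sigma))$ to conclude.

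The main obstacle is exactly this construction of $M$: the naive Hermitian symmetrization $\tfrac12(P_\rho^{>t}P_\sigma^{<t}+P_\sigma^{<t}P_\rho^{>t})$ need not lie in $[0,I]$ because the two projectors do not commute, and a small $d=3$ Fourier example shows that the pointwise bound $\Tr(P_\rho^{>t}P_\sigma^{<t})\le 2\Tr(P_{\rho-\sigma}^{>t})$ genuinely fails, so the $M$ must really be obtained by integration. As a parallel route I would decompose $(p_{kj})=\sum_\pi c_\pi\Pi_\pi$ by Birkhoff's theorem, whereby the left-hand side of the target inequality becomes $\tfrac12\sum_\pi c_\pi\sum_j|\lambda_j-\mu_{\pi(j)}|$; the order-preserving permutation is handled by Mirsky's inequality $\|\rho-\sigma\|_1\ge\sum_j|\lambda_j-\mu_j|$, but controlling the non-identity permutations would require a quantitative, overlap-aware strengthening of Mirsky that links the weights $c_\pi$ to the $\ell^1$ mismatches $\sum_j|\lambda_j-\mu_{\pi(j)}|$. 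Proving such a strengthening is, in my view, the technical crux of the conjecture.
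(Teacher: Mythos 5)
Your submission is not a proof, and it could not be compared with one: the statement is presented in the paper as Conjecture~\ref{conjecture} precisely because no proof is known, and your write-up, by its own admission, halts at the same obstruction. That said, your preliminary reductions are correct and they genuinely sharpen the problem beyond the form recorded in the paper (which only notes the expansion $\Tr(D(\rho-\sigma))=\sum_{kj}C_{kj}(\lambda_j-\mu_k)p_{kj}$). The vertex reduction is valid: the map $(C_{kj})\mapsto\Tr(D(\rho-\sigma))$ is affine, and since $(p_{kj})=(|\langle\psi_j|\phi_k\rangle|^2)$ is doubly stochastic the full sum $\sum_{jk}(\lambda_j-\mu_k)p_{kj}$ vanishes, so the extremal value of $|\Tr(D(\rho-\sigma))|$ over the cube is exactly $\tfrac{C}{2}\sum_{jk}|\lambda_j-\mu_k|p_{kj}$; hence the conjecture is equivalent to the scalar inequality $\tfrac12\sum_{jk}|\lambda_j-\mu_k|p_{kj}\le\|\rho-\sigma\|_1$. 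The layer-cake identity expressing the positive part as $\int_0^\infty\Tr\bigl(P_\rho^{>t}P_\sigma^{<t}\bigr)\,dt$ is also correct.

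The decisive step, however, is missing on both of your proposed routes, and you say so yourself. On the Helstrom route you would need a single operator $0\le M\le I$ with $\int_0^\infty\Tr\bigl(P_\rho^{>t}P_\sigma^{<t}\bigr)\,dt\le\Tr(M(\rho-\sigma))$; no such $M$ is exhibited, and as you observe the natural candidates (symmetrized products of the noncommuting projectors) need not lie in $[0,I]$, while the pointwise-in-$t$ comparison is genuinely false (already for $d=2$ with nearly degenerate spectra in rotated bases, where $\Tr(P_\rho^{>t}P_\sigma^{<t})$ can equal $\tfrac12$ on a small window while $P_{\rho-\sigma}^{>t}=0$ there). On the Birkhoff route, Mirsky controls only the order-preserving permutation; for non-identity permutations $\sum_j|\lambda_j-\mu_{\pi(j)}|$ can far exceed $\|\rho-\sigma\|_1$, and the required coupling between the Birkhoff weights $c_\pi$ and these mismatches is exactly the open content of the conjecture, not a known lemma. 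For calibration: the paper itself establishes the claim only when $\rho-\sigma$ is diagonal in one of the two eigenbases or when $d=2$ (Lemma~\ref{lemma}, via Cauchy--Schwarz and the explicit $2\times2$ trace norm), and in general only the weaker bound carrying an extra factor $\sqrt d$ (Theorem~\ref{thm:main}, via $\|D\|_\infty\le\|D\|_2\le C\sqrt d$). Your equivalent reformulations would be a reasonable starting point for an attack, but as written the conjecture remains unproved.
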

Note that with the above notations,
$$\Tr(D(\rho-\sigma))=\sum_{kj}C_{kj}(\lambda_j-\mu_k)|\bra{\phi_k}\ket{\psi_j}|^2 \ .$$

\section{Relative entropy}\label{sec:rel-better}
The proof of the following upper bound on the relative entropy is inspired by the proof in \cite{R10}.

\begin{theorem}\label{thm:main2}
Let $\rho$ and $\sigma$ be two strictly positive density operators on a finite-dimensional Hilbert space. Then 
\begin{equation}\label{eq:main2}
S(\rho\|\sigma)\leq\|\rho-\sigma\|_1\lambda_\rho \frac{\log(\alpha_\rho)-\log(\alpha_\sigma)}{\alpha_\rho-\alpha_\sigma}\leq\|\rho-\sigma\|_1\frac{\lambda_\rho}{\alpha}  \ ,
\end{equation}
where

\begin{itemize}
\item $\lambda_\rho\in(0,1]$ is the largest eigenvalue of $\rho$,
\item $\alpha=\min\{\alpha_\rho, \alpha_\sigma\}\in(0,1]$ is the minimum between the smallest  eigenvalues of $\rho$ and $\sigma$.
\end{itemize}
\end{theorem}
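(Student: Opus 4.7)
My plan is to start from the standard integral representation of the logarithm and reduce everything to a resolvent/trace inequality, mimicking the scheme of Theorem~\ref{thm:main3} but with a coarser operator-norm estimate instead of the delicate Lemma~\ref{lemma}. Concretely, using $-\log x=\int_0^\infty[(t+x)^{-1}-t/(t^2+1)]\,dt$ applied via functional calculus gives
\begin{equation*}
\log\rho-\log\sigma=\int_0^\infty\bigl[(t+\sigma)^{-1}-(t+\rho)^{-1}\bigr]dt,
\end{equation*}
and then the resolvent identity $A^{-1}-B^{-1}=A^{-1}(B-A)B^{-1}$ converts the integrand into $(t+\sigma)^{-1}(\rho-\sigma)(t+\rho)^{-1}$. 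Substituting into $S(\rho\|\sigma)=\Tr[\rho(\log\rho-\log\sigma)]$ and cycling the trace yields
\begin{equation*}
S(\rho\|\sigma)=\int_0^\infty\Tr\bigl[(t+\rho)^{-1}\rho(t+\sigma)^{-1}(\rho-\sigma)\bigr]dt.
\end{equation*}

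The next step is to apply the standard duality $|\Tr(XY)|\le\|X\|_\infty\|Y\|_1$ with $Y=\rho-\sigma$, which pulls $\|\rho-\sigma\|_1$ out of the integral. What remains is to bound $\|(t+\rho)^{-1}\rho(t+\sigma)^{-1}\|_\infty$. Here the essential choice is to use the triple-product submultiplicativity $\|ABC\|_\infty\le\|A\|_\infty\|B\|_\infty\|C\|_\infty$ with $\rho$ as the \emph{middle} factor, so that $\|\rho\|_\infty=\lambda_\rho$ is isolated while each resolvent contributes its smallest-eigenvalue bound; this gives
\begin{equation*}
\|(t+\rho)^{-1}\rho(t+\sigma)^{-1}\|_\infty\le\frac{\lambda_\rho}{(t+\alpha_\rho)(t+\alpha_\sigma)}.
\end{equation*}
Integrating with partial fractions produces $\lambda_\rho(\log\alpha_\rho-\log\alpha_\sigma)/(\alpha_\rho-\alpha_\sigma)$ when $\alpha_\rho\ne\alpha_\sigma$, and the degenerate case $\alpha_\rho=\alpha_\sigma$ follows by taking the limit $\lambda_\rho/\alpha$. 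Finally, the second inequality in \eqref{eq:main2} is a one-line mean-value consequence of the concavity of $\log$: the divided difference $(\log\alpha_\rho-\log\alpha_\sigma)/(\alpha_\rho-\alpha_\sigma)$ equals $1/c$ for some $c$ strictly between $\alpha_\rho$ and $\alpha_\sigma$, so it is bounded by $1/\min\{\alpha_\rho,\alpha_\sigma\}=1/\alpha$.

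The only delicate point is the bookkeeping of \emph{which} factor absorbs $\lambda_\rho$ and which contribute the $\alpha$-type factors: the natural attempt of bounding $\|(t+\rho)^{-1}\rho\|_\infty$ as a single block gives the sharper $\lambda_\rho/(t+\lambda_\rho)$, but then the subsequent integral produces $(\log\lambda_\rho-\log\alpha_\sigma)/(\lambda_\rho-\alpha_\sigma)$ rather than the statement of the theorem. Separating the three norms and allowing the slightly weaker estimate $\lambda_\rho/(t+\alpha_\rho)$ is what makes the integral evaluate to the announced form and delivers the clean upper bound $\lambda_\rho/\alpha$. No new ingredient beyond submultiplicativity, cyclicity, and the integral formula for $\log$ is needed, which matches the remark that the argument is inspired by \cite{R10}.
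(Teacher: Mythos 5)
Your proposal is correct and follows essentially the same route as the paper's proof: the integral representation of $-\log$, the resolvent identity, the bound $\lambda_\rho/[(t+\alpha_\rho)(t+\alpha_\sigma)]$ on the operator-norm factor, partial fractions, and the mean-value step for the second inequality. The only cosmetic difference is that you cycle the trace and apply $|\Tr(XY)|\le\|X\|_\infty\|Y\|_1$ to a single block, where the paper invokes the three-factor inequality $|\Tr(XYZ)|\le\|X\|_\infty\|Z\|_\infty\Tr|Y|$ directly; your closing remark that keeping $(t+\rho)^{-1}\rho$ as one block yields the sharper factor $\lambda_\rho/(t+\lambda_\rho)$, hence the stronger divided difference $(\log\lambda_\rho-\log\alpha_\sigma)/(\lambda_\rho-\alpha_\sigma)$ in \emph{every} dimension, is a valid observation that the paper only records for qubits in Corollary~\ref{cor:rel}.
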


\begin{proof}{\it(of Theorem \ref{thm:main2})} 
The relative entropy is
\begin{align}
S(\rho\|\sigma)&=\Tr\{(\log\rho-\log\sigma)\rho \}\ .
\end{align}

By (\ref{eq:logex}) logarithm admits the following representation:
\begin{equation}\label{eq:g2}
-\log(x)=\int_0^\infty\left(\frac{1}{t+x}-\frac{t}{t^2+1} \right)dt\ . 
\end{equation}
Therefore,
\begin{equation}\label{eq:S2}
S(\rho\|\sigma)=\int_0^\infty\Tr\left\{\left(\frac{1}{t+\sigma}-\frac{1}{t+\rho} \right)\rho\right\} dt\ .
\end{equation}

Note that the formula $A^{-1}-B^{-1}=A^{-1}(B-A)B^{-1}$ holds for any invertible operators $A$ and $B$.  Therefore, 
\begin{align}
&\left| \int_0^\infty\Tr\left\{\left(\frac{1}{t+\rho}-\frac{1}{t+\sigma} \right)\rho\right\} dt\right|\leq \int_0^\infty\Tr\left|({t+\rho})^{-1}(\sigma-\rho)({t+\sigma})^{-1}\rho\right| dt\ .
\end{align}
Since for two operators $\|XY\|_\infty\leq \|X\|_\infty\|Y\|_\infty$, we have that the last line can be bounded as
\begin{align}
&\leq \int_0^\infty \|\rho-\sigma\|_1 \|(t+\rho)^{-1}\|_\infty\|({t+\sigma})^{-1}\|_\infty\|\rho\|_\infty dt\\
&= \|\rho-\sigma\|_1\lambda_\rho \int_0^\infty \frac{1}{t+\alpha_\sigma}\frac{1}{t+\alpha_\rho}dt\\
&= \|\rho-\sigma\|_1\frac{1}{\alpha_\rho-\alpha_\sigma}\lambda_\rho \int_0^\infty\left( \frac{1}{t+\alpha_\sigma}-\frac{1}{t+\alpha_\rho}\right)dt\ .
\end{align}

From (\ref{eq:g2}) we have
\begin{align}
&\int_0^\infty\left( \frac{1}{t+\alpha_\sigma}-\frac{1}{t+\alpha_\rho}\right)dt=\log(\alpha_\rho)-\log(\alpha_\sigma)\ ,
\end{align}
and therefore 
$$S_f(\rho\|\sigma)\leq  \|\rho-\sigma\|_1\lambda_\rho \frac{\log(\alpha_\rho)-\log(\alpha_\sigma)}{\alpha_\rho-\alpha_\sigma}\ .$$

Without loss of generality assume that $\alpha_\rho\leq \alpha_\sigma$. By the Mean Value Theorem, there exists $c\in[\alpha_\rho,\alpha_\sigma]$ such that
$$\frac{\log(\alpha_\rho)-\log(\alpha_\sigma)}{\alpha_\rho-\alpha_\sigma}=\frac{1}{c}\leq\frac{1}{\alpha_\rho}\ .$$
This leads to the statement in the Theorem.
\end{proof}

\begin{remark}\label{rem:rel-better}
Let us give an example when the derived bound (\ref{eq:main2}) is better than the known bound for the relative entropy (\ref{eq:rel_known}). Let $\cH$ be a $d$-dimensional Hilbert space with $d\geq 5$, with the orthonormal basis denoted as $\{\ket{\psi_j}\}_{j=1}^d$. Let us take the following states
$$\rho=\sum_{j=1}^d \frac{1}{d}\ket{\psi_j}\bra{\psi_j}\ , $$
and 
$$\sigma=\frac{1}{d}\ket{\psi_1}\bra{\psi_1}+\left(1-\frac{1}{d}\right)\ket{\psi_2}\bra{\psi_2}\ . $$
Then the trace distance between these states is
$$\|\rho-\sigma\|_1=\left(1-\frac{2}{d}\right)+(d-2)\frac{1}{d}=2-\frac{4}{d}\ . $$
The upper bound in (\ref{eq:rel_known}) is 
$$S(\rho\|\sigma)\leq\frac{1}{2}\|\rho-\sigma\|_1\log(1+d\|\rho-\sigma\|_1/2)=\frac{1}{2}\|\rho-\sigma\|_1\log(d-1)\ .$$
The upper bound that we derived in  (\ref{eq:main2}) gives
$$S(\rho\|\sigma)\leq\|\rho-\sigma\|_1\ . $$
For $d\geq 5$, 
$$\frac{1}{2}\log(d-1)\geq 1\ . $$
This shows that for these states, our new bound is better than the old one. Clearly, these are not the only states for which the new bound is better, but these states give an example when it is.
\end{remark}

\begin{corollary}\label{cor:rel}
If $\rho$ and $\sigma$ are qubits, then
$$ S(\rho\|\sigma)\leq  \|\rho-\sigma\|_1 \lambda_\rho \frac{\log\lambda_\rho-\log\alpha_\sigma}{\lambda_\rho-\alpha_\sigma}\leq \|\rho-\sigma\|_1 \frac{\lambda_\rho}{\alpha_\sigma}\ . $$
\end{corollary}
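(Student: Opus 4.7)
The plan is to derive this corollary as a direct specialization of Theorem \ref{thm:main3} to the logarithmic function, followed by a short Mean Value Theorem argument for the second inequality.

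First, I would verify that $f(x) = -\log x$ satisfies the hypotheses of Theorem \ref{thm:main3}: it is operator monotone decreasing (as recorded in the example after the corollary to L\"owner's theorem), and clearly $f(1) = 0$. Next I would compute the constant $a_f = -\lim_{y\uparrow\infty} f(iy)/(iy)$ for this choice; this calculation is already carried out in Example \ref{ex-log}, yielding $a_f = 0$. Since $\rho$ and $\sigma$ are qubit states (the $d=2$ case), the first hypothesis of Theorem \ref{thm:main3} applies.

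Substituting into the bound \eqref{eq:main}, the theorem immediately gives
\begin{equation*}
S(\rho\|\sigma) = S_{-\log}(\rho\|\sigma) \leq \|\rho-\sigma\|_1 \cdot \frac{\lambda_\rho}{\lambda_\rho - \alpha_\sigma} \cdot \bigl(-\log(\lambda_\rho^{-1}\alpha_\sigma)\bigr),
\end{equation*}
which is precisely $\|\rho-\sigma\|_1 \lambda_\rho \, (\log\lambda_\rho - \log\alpha_\sigma)/(\lambda_\rho - \alpha_\sigma)$, the first inequality in the corollary.

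For the second inequality, I would apply the Mean Value Theorem to the function $\log$ on the interval $[\alpha_\sigma, \lambda_\rho]$ (noting that $\alpha_\sigma \leq \lambda_\rho$ since $\alpha_\sigma \leq 1$ and $\lambda_\rho \geq 1/2$ for a qubit state, or more simply since $\alpha_\sigma$ is the smallest eigenvalue of a state and $\lambda_\rho$ is the largest of another). There exists $c \in [\alpha_\sigma, \lambda_\rho]$ with $(\log\lambda_\rho - \log\alpha_\sigma)/(\lambda_\rho - \alpha_\sigma) = 1/c \leq 1/\alpha_\sigma$, giving the claimed upper bound $\|\rho-\sigma\|_1 \lambda_\rho/\alpha_\sigma$. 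There is no real obstacle here; the main work has already been absorbed into Theorem \ref{thm:main3}, and the only care needed is ensuring that the interval endpoints are correctly ordered (which follows from $\alpha_\sigma \leq \lambda_\sigma \leq 1$ and the Weyl-type comparison; alternatively one may simply observe that the case $\lambda_\rho = \alpha_\sigma$ is handled by continuity).
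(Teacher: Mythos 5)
Your proposal is correct and follows essentially the same route as the paper: specialize Theorem \ref{thm:main3} to $f(x)=-\log x$ (with $a_f=0$ as computed in Example \ref{ex-log}) and then apply the Mean Value Theorem on $[\alpha_\sigma,\lambda_\rho]$, where $\alpha_\sigma\leq\lambda_\rho$ follows from $\alpha_\sigma\leq 1/2\leq\lambda_\rho$ for unit-trace qubit states. The only cosmetic difference is that your justification of $\alpha_\sigma\leq\lambda_\rho$ is a bit more roundabout than the paper's one-line trace argument.
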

\begin{proof}
If $\rho$ and $\sigma$ are two-dimensional, then from Theorem \ref{thm:main3}, we obtain
$$S(\rho\|\sigma)\leq \|\rho-\sigma\|_1 \lambda_\rho \frac{\log\lambda_\rho-\log\alpha_\sigma}{\lambda_\rho-\alpha_\sigma}\ . $$
Since $\Tr\rho=\Tr\sigma=1$, we have  $\alpha_\sigma\leq \lambda_\rho$. By the Mean Value Theorem for the logarithmic function, there exists $c\in[\alpha_\sigma, \lambda_\rho]$ such that
$$ \frac{\log\lambda_\rho-\log\alpha_\sigma}{\lambda_\rho-\alpha_\sigma}=c^{-1}\leq \alpha_\sigma^{-1}\ . $$
Therefore,
$$ S(\rho\|\sigma)\leq \|\rho-\sigma\|_1 \frac{\lambda_\rho}{\alpha_\sigma}\ . $$
\end{proof}

\section{ Tsallis entropy for $q>1$}
\begin{remark} \label{rem}
{Let us look at the inequality (3.7) in \cite{R10},} which is used in the derivation of the bound (\ref{eq:Rmoreold}): the inequality states
\begin{equation}\label{eq:rem1}
\frac{\alpha_\rho^{-r}-\alpha_\sigma^{-r}}{\alpha_\sigma-\alpha_\rho}\leq \frac{1}{\alpha^q}\ ,
\end{equation}
where $r:=q-1\in(0,1]$ for $q\in(1,2].$

The function $f(x):=-x^{-r}$ is concave and monotonically increasing for $r\in(0,1].$ Then by the Mean Value Theorem, there exists a point $c$ between points $\alpha_\rho$ and $\alpha_\sigma$, such that
$$\frac{f(\alpha_\sigma)-f(\alpha_\rho)}{\alpha_\sigma-\alpha_\rho}=f'(c)\leq f'(\alpha)=r\alpha^{-r-1}=(q-1)\alpha^{-q}\ , $$
where $\alpha=\min\{ \alpha_\rho, \alpha_\sigma\}$. This improves the constant in (\ref{eq:rem1}), leading to the bound (\ref{eq:Rmore}).
\end{remark}

\section{Tsallis entropy for $q\in(0,1)$}

For $q\in(0,1)$ the function $f(x)=\frac{1}{1-q}(1-x^{1-q})$ is an operator monotone decreasing function, which defines the quasi-entropy
$$S_q(\rho\|\sigma)=\frac{1}{1-q}(1-\Tr(\rho^q\sigma^{1-q}))\ . $$

\begin{theorem}\label{thm:main1}
Let $\rho$ and $\sigma$ be two strictly positive density operators on a finite-dimensional Hilbert space. Then 
\begin{equation}\label{eq:main1}
S_q(\rho\|\sigma)\leq \frac{1}{1-q} \|\rho-\sigma\|_1\lambda_\rho^q \frac{\alpha_\rho^{1-q}-\alpha_\sigma^{1-q}}{\alpha_\rho-\alpha_\sigma}\leq\|\rho-\sigma\|_1\frac{\lambda_\rho^q}{\alpha^q}\ ,
\end{equation}
where

\begin{itemize}
\item $\lambda_\rho$ is the largest eigenvalue of $\rho$,
\item $\alpha=\min\{\alpha_\rho,\alpha_\sigma\} \in(0,1]$ is the minimum between the smallest eigenvalues of $\rho$ and $\sigma$.
\end{itemize}
\end{theorem}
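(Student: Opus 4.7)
The plan is to imitate the integral-representation argument used in the proof of Theorem~\ref{thm:main2}. The first step is the algebraic identity
\begin{equation*}
S_q(\rho\|\sigma)=\frac{1}{1-q}\Tr\bigl(\rho^q(\rho^{1-q}-\sigma^{1-q})\bigr),
\end{equation*}
which follows from $1=\Tr(\rho^q\rho^{1-q})$. This reduces the task to controlling the trace of $\rho^q$ against the operator difference $\rho^{1-q}-\sigma^{1-q}$.

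Next, I would invoke the integral representation (\ref{eq:powex}) of $-x^{p}$ for $p=1-q\in(0,1)$: since the constants and the $t/(t^{2}+1)$ term cancel in the operator difference, one obtains
\begin{equation*}
\rho^{1-q}-\sigma^{1-q}=\frac{\sin((1-q)\pi)}{\pi}\int_{0}^{\infty}t^{1-q}\left(\frac{1}{t+\sigma}-\frac{1}{t+\rho}\right)dt.
\end{equation*}
Applying the resolvent identity $(t+\sigma)^{-1}-(t+\rho)^{-1}=(t+\sigma)^{-1}(\rho-\sigma)(t+\rho)^{-1}$, taking the trace against $\rho^q$, and bounding via cyclicity together with $|\Tr(XY)|\le\|X\|_{1}\|Y\|_{\infty}$ and submultiplicativity of the operator norm (exactly as in Theorem~\ref{thm:main2}), each integrand is majorised by $\|\rho-\sigma\|_{1}\,\lambda_{\rho}^{q}\,(t+\alpha_{\rho})^{-1}(t+\alpha_{\sigma})^{-1}$.

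The decisive calculation is then the scalar integral $\int_{0}^{\infty}t^{1-q}/((t+\alpha_{\rho})(t+\alpha_{\sigma}))\,dt$. Using $\tfrac{t}{(t+a)(t+b)}=\tfrac{1}{a-b}\bigl(\tfrac{a}{t+a}-\tfrac{b}{t+b}\bigr)$ together with the classical Euler identity $\int_{0}^{\infty}u^{p-1}(u+c)^{-1}du=\pi c^{p-1}/\sin(\pi p)$ valid for $p\in(0,1)$, the integral evaluates to $\pi(\alpha_{\rho}^{1-q}-\alpha_{\sigma}^{1-q})/((\alpha_{\rho}-\alpha_{\sigma})\sin((1-q)\pi))$. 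Substituting back cancels the prefactor $\sin((1-q)\pi)/\pi$ and yields the first inequality of the theorem. The second inequality is the mean value theorem applied to $x\mapsto x^{1-q}$ on the interval between $\alpha_{\rho}$ and $\alpha_{\sigma}$: the difference quotient equals $(1-q)c^{-q}$ for some intermediate $c$, and $c^{-q}\le\alpha^{-q}$ with $\alpha=\min\{\alpha_{\rho},\alpha_{\sigma}\}$. I do not anticipate substantial obstacles: the operator-theoretic ingredients are exactly those already used for Theorem~\ref{thm:main2}, and the only novel input is the evaluation of the beta-type scalar integral, which is a short exercise.
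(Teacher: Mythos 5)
Your proposal is correct and follows essentially the same route as the paper: reduce to $\frac{1}{1-q}\Tr\bigl(\rho^q(\rho^{1-q}-\sigma^{1-q})\bigr)$, apply the integral representation of the power function, use the resolvent identity with the three-factor H\"older bound $|\Tr(XYZ)|\le\|X\|_\infty\|Z\|_\infty\Tr|Y|$, evaluate the resulting scalar integral by partial fractions, and finish with the mean value theorem. The only (cosmetic) difference is that you evaluate the explicit Euler/beta integral for the measure $\frac{\sin((1-q)\pi)}{\pi}t^{1-q}\,dt$, whereas the paper works with the abstract measure $d\mu_g$ and reads off the answer from the integral representation of $g$ itself.
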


\begin{proof}
With the above function $f$, let us denote $g(x)=\frac{1}{1-q}x^{1-q}$, which is an operator monotone function. Since $S_f(\rho\|\rho)=0= \Tr(f(\Delta_{\rho,\rho})\rho)$, we have
\begin{align}
S_q(\rho\|\sigma)=S_f(\rho\|\sigma)&=\Tr\{f(\Delta_{\sigma,\rho})\rho)- \Tr(f(\Delta_{\rho,\rho})\rho\}\\
&=\Tr\{(f(L_\sigma R_{\rho^{-1}})-f(L_\rho R_{\rho^{-1}}))\rho \}\\
&=\Tr\{(g(\rho )-g(\sigma))\rho^q \}\ .
\end{align}
Here we used  that  the modular operator is the product of left and right multiplications, $\Delta_{\sigma,\rho}=L_\sigma R_{\rho^{-1}}$.

Since  $g(x)$ be an operator monotone function, from (\ref{low2}), it admits the following integral representation (see also (\ref{eq:powex})) :
\begin{equation}\label{eq:g}
g(x)=g(1)+a_g(x-1)-\int_0^\infty\left(\frac{1}{t+x}-\frac{t}{t^2+1} \right)d\mu_g(t)\ , 
\end{equation}
where $a_g\geq0$.
Therefore,
\begin{equation}\label{eq:S}
S_f(\rho\|\sigma)=a_g\Tr\left\{(\sigma-\rho) h({\rho^{-1}})\rho\right\} -\int_0^\infty\Tr\left\{\left(\frac{1}{t+\rho}-\frac{1}{t+\sigma} \right)\rho^q\right\} d\mu_g(t)\ .
\end{equation}

By \cite{W18}, for any operators $X, Y, Z$ the following bound holds 
\begin{equation}\label{eq:three}
|\Tr(XYZ)|\leq \|X\|_\infty\|Z\|_\infty \Tr|Y|\ . 
\end{equation}

For the first term in (\ref{eq:S}), we use the bound above for two operators:
\begin{equation}\label{eq:first}
\left| \Tr\left\{(\sigma-\rho)\rho^q\right\} \right|\leq \|\rho-\sigma\|_1\|\rho^q\|_\infty=  \|\rho-\sigma\|_1\lambda_\rho^q\ .
\end{equation}

For the second term, we note that the formula $A^{-1}-B^{-1}=A^{-1}(B-A)B^{-1}$ holds for any invertible operators $A$ and $B$.  Therefore, 
\begin{align}
&\left| \int_0^\infty\Tr\left\{\left(\frac{1}{t+\rho}-\frac{1}{t+\sigma} \right)\rho^q\right\} d\mu_g(t)\right|\leq \int_0^\infty\Tr\left|({t+\rho})^{-1}(\sigma-\rho)({t+\sigma})^{-1}\rho^q\right| d\mu_g(t)\ .
\end{align}
Applying (\ref{eq:three}) and the fact that for two operators $\|XY\|_\infty\leq \|X\|_\infty\|Y\|_\infty$, we have that the last line can be bounded as
\begin{align}
&\leq \int_0^\infty \|\rho-\sigma\|_1 \|(t+\rho)^{-1}\|_\infty\|({t+\sigma})^{-1}\|_\infty\| \rho^q\|_\infty d\mu_g(t)\\
&= \|\rho-\sigma\|_1\lambda_\rho^q\int_0^\infty \frac{1}{t+\alpha_\sigma}\frac{1}{t+\alpha_\rho}d\mu_g(t)\\
&= \|\rho-\sigma\|_1\frac{1}{\alpha_\rho-\alpha_\sigma}\lambda_\rho^q \int_0^\infty\left( \frac{1}{t+\alpha_\sigma}-\frac{1}{t+\alpha_\rho}\right)d\mu_g(t)\ .
\end{align}

Note that from (\ref{eq:g}) we have
\begin{align}
&\int_0^\infty\left( \frac{1}{t+\alpha_\sigma}-\frac{1}{t+\alpha_\rho}\right)d\mu_g(t)=g(\alpha_\rho)-g(\alpha_\sigma)-a_g(\alpha_\rho-\alpha_\sigma)\ ,
\end{align}
and therefore the second term can be bounded by
\begin{equation}\label{eq:second}
\|\rho-\sigma\|_1\lambda_\rho^q \frac{g(\alpha_\rho)-g(\alpha_\sigma)}{\alpha_\rho-\alpha_\sigma}-a_g\|\rho-\sigma\|_1\lambda_\rho^q\ .
\end{equation}

Putting (\ref{eq:first}) and (\ref{eq:second}) together, we find
$$S_f(\rho\|\sigma)\leq  \|\rho-\sigma\|_1\lambda_\rho^q \frac{g(\alpha_\rho)-g(\alpha_\sigma)}{\alpha_\rho-\alpha_\sigma}\ .$$

Without loss of generality assume that $\alpha_\rho\leq \alpha_\sigma$. By the Mean Value Theorem, there exists $c\in[\alpha_\rho,\alpha_\sigma]$ such that
$$\frac{g(\alpha_\rho)-g(\alpha_\sigma)}{\alpha_\rho-\alpha_\sigma}=g'(c)\ .$$
Since $g$ is concave, the derivative $g'$ is monotonically decreasing, therefore
$$g'(c)\leq g'(\alpha_\rho)=\alpha_\rho^{-q}\ . $$
Thus we arrive at the statement in the Theorem.
\end{proof}

\section{Tsallis entropy for qubits}\label{sec:Ts-qubits}
For $q\in(0,2)$, function $f(x)=\frac{1}{1-q}(1-x^{1-q})$ is operator monotone decreasing.
If $\rho$ and $\sigma$ are two-dimensional states, then from Theorem \ref{thm:main}, we obtain
\begin{align}
S_{q}(\rho\|\sigma)&\leq \frac{1}{1-q}\, \|\rho-\sigma\|_1\, \frac{\lambda_\rho}{\lambda_\rho-\alpha_\sigma}(1-\lambda_\rho^{q-1}\alpha_\sigma^{1-q})\\
&=\frac{1}{1-q}\,\lambda_\rho^{q} \|\rho-\sigma\|_1\, \frac{\lambda_\rho^{1-q}-\alpha_\sigma^{1-q}}{\lambda_\rho-\alpha_\sigma}\ .
\end{align}
Since $\Tr\rho=\Tr\sigma=1$, we have $\alpha_\sigma\leq\lambda_\rho$. By the Mean Value Theorem, there exists $c$ between points $\alpha_\sigma$ and $\lambda_\rho$ such that
$$\frac{\lambda_\rho^{1-q}-\alpha_\sigma^{1-q}}{\lambda_\rho-\alpha_\sigma}=(1-q)c^{-q}\ ,$$
and $$c^{-q}\leq \alpha_\sigma^{-q}\ .  $$
Therefore,
$$ S_{q}(\rho\|\sigma)\leq  \|\rho-\sigma\|_1\,\frac{\lambda_\rho^{q}}{\alpha_\sigma^{q}} \ . $$

\vspace{0.3in}
\textbf{Acknowledgments.}  A. V. is partially supported by NSF grant DMS-1812734.

\end{document}